\newcommand{\bitem}{\begin{itemize}}
\newcommand{\eitem}{\end{itemize}}
\newcommand{\benum}{\begin{enumerate}}
\newcommand{\eenum}{\end{enumerate}}
\newcommand{\beqr}{\begin{eqnarray*}}
\newcommand{\eeqr}{\end{eqnarray*}}
\newcommand{\beqrn}{\begin{eqnarray}}
\newcommand{\eeqrn}{\end{eqnarray}}
\newcommand{\barr}{\begin{array}}
\newcommand{\earr}{\end{array}}
\newcommand{\beq}{\begin{equation}}
\newcommand{\eeq}{\end{equation}}
\newcommand{\bfig}{\begin{figure}}
\newcommand{\efig}{\end{figure}}
\newcommand{\iid}{\stackrel{\mbox{\tiny IID}}{\sim}}
\newcommand{\graphsc}[2]{\includegraphics[scale = #1]{#2}}
\newcommand{\rline}{{\mathnormal R}}
\newcommand{\ex}{{\mathrm{E}}}
\newcommand{\var}{{\mathrm{Var}}}
\newcommand{\cov}{{\mathrm{Cov}}}
\newtheorem{thm}{Theorem}[section]
\title{Adaptive Gaussian Predictive Process Approximation}
\author{Surya T Tokdar\\{\small \it Duke University}}
\date{}
\newcommand{\lambdam}{\hat \lambda}
\newcommand{\Lambdam}{\hat\Lambda}
\newcommand{\Psim}{{\hat \Psi}}
\newcommand{\tol}{{\mathsf{tol}}}
\newcommand{\knots}{{knots}}
\newcommand{\data}{{data}}
\begin{document}
\maketitle

\begin{abstract}
We address the issue of knots selection for Gaussian predictive process methodology. Predictive process approximation provides an effective solution to the cubic order computational complexity of Gaussian process models. This approximation crucially depends on a set of points, called knots, at which the original process is retained, while the rest is approximated via a deterministic extrapolation. Knots should be few in number to keep the computational complexity low, but provide a good coverage of the process domain to limit approximation error. We present theoretical calculations to show that coverage must be judged by the canonical metric of the Gaussian process. This necessitates having in place a knots selection algorithm that automatically adapts to the changes in the canonical metric affected by changes in the parameter values controlling the Gaussian process covariance function. We present an algorithm toward this by employing an incomplete Cholesky factorization with pivoting and dynamic stopping. Although these concepts already exist in the literature, our contribution lies in unifying them into a fast algorithm and in using computable error bounds to finesse implementation of the predictive process approximation. The resulting adaptive predictive process offers a substantial automatization of Guassian process model fitting, especially for Bayesian applications where thousands of values of the covariance parameters are to be explored.\\

\noindent {\it Keywords:} Gaussian predictive process, Knots selection, Cholesky factorization, Pivoting, Bayesian model fitting, Markov chain sampling.
\end{abstract}

\section{Introduction}

Bayesian nonparametric methodology is driven by construction of prior distributions on function spaces. Toward this, Gaussian process distributions have proved extremely useful due to their mathematical and computational tractability and ability to incorporate a wide range of smoothness assumptions. Gaussian process models have been widely used in spatio-temporal modeling \citep{handcock.stein.93, kim.etal.05, banerjee&etal08}, computer emulation \citep{kennedy&ohagan01, oakley&ohagan02}, non-parametric regression and classification \citep{neal98, csato.etal.00, rasmussen&williams06, short.etal.07}, density and quantile regression \citep{tokdar&etal10, tokdar&kadane11}, functional data analysis \citep{shi&wang08, petrone&etal09}, image analysis \citep{sudderth&jordan09}, etc. \citet{rasmussen&williams06} give a thorough overview of likelihood based exploration of Gaussian process models, including Bayesian treatments. For theoretical details on common Bayesian models based on Gaussian processes, see \citet{tokdar&ghosh07}, \citet{choi&schervish07}, \citet{ghosal&roy06}, \citet{vandervaart&vanzanten08, vandervaart&vanzanten09} and the references therein. 

For our purpose, a Gaussian process can be viewed as a random, real valued function $\omega = (\omega(t), t \in T)$ on a compact Euclidean domain $T$, such that for any finitely many points $t_1, \cdots, t_k \in T$ the random vector $(\omega(t_1), \cdots, \omega(t_k))$ is a $k$-dimensional Gaussian vector. A Gaussian process $\omega$ is completely characterized by its mean and covariance functions $\mu(t) = \ex[ \omega(t)]$ and $\psi(s, t) = \cov[\omega(s), \omega(t)]$. For a Gaussian process model, where a function valued parameter $\omega$ is assigned a Gaussian process prior distributions, the data likelihood typically involves $\omega$ through a vector $W = (\omega(s_1), \cdots, \omega(s_N))$ of $\omega$ values at a finite set of points $s_1, \cdots, s_N \in T$. These points could possibly depend on other model parameters. The fact that $W$ is a Gaussian vector makes computation conceptually straightforward.

However, a well known bottleneck in implementing Gaussian process models is the $O(N^3)$ complexity of inverting or factorizing the $N\times N$ covariance matrix of $W$. Various reduced rank approximations to covariance matrices have been proposed to overcome this problem \citep{smola.bartlett.01, seeger.etal.03, schwaighofer.tresp.03, candela&williams05, snelson&ghahramani06}, mostly reported in the machine learning literature. Among these, a special method of approximation, known as predictive process approximation \citep{tokdar07, banerjee&etal08}, has been independently discovered and successfully used in the Bayesian literature. The appeal of this method lies in a stochastic process representation of the approximation that obtains from tracking $\omega$ at a small number of points, called knots, and extrapolating the rest by using properties of Gaussian process laws (Section \ref{pp}).

For predictive process approximations, choosing the number and locations of the knots remains a difficult problem. This problem is only going to escalate as more complex Gaussian process models are used in hierarchical Bayesian modeling, with rich parametric and non-parametric formulations of the Gaussian process covariance function becoming commonplace. To understand this difficulty, we first lay out (Section \ref{accuracy}) the basic probability theory behind the approximation accuracy of the predictive process and demonstrate how the choice of knots determines an accuracy bound. The key concept here is that the knots must provide a good coverage of the domain $T$ of the Gaussian process. While this is intuitive, what needs emphasis is that the geometry of $T$ is to be viewed through the topology induced by the Gaussian process canonical metric $\rho(s, t) = [\ex\{\omega(s) - \omega(t)\}^2]^{1/2}$, which could be quite different from the Euclidean geometry of $T$. 

This theory helps understand (Section \ref{why}) why existing approaches of choosing knots, based on space filling design concepts \citep{zhu.stein.05, zimmerman.06, finley&etal09} or model extensions where knots are learned from data as additional model parameters \citep{snelson&ghahramani06, tokdar07, guhaniyogi.etal.11}, are likely to offer poor approximation and face severe computational difficulties. A fundamental weakness of these approaches is their inability to automatically adapt to the changes in the geometry of $T$ caused by changes in the values of the covariance parameters.

In Section \ref{app} we present a simple extension of the predictive process approximation that enables it to automatically adapt to the geometry of the Gaussian process covariance function. This extension, called {\it adaptive predictive process approximation}, works with an equivalent representation of the predictive process through reduced rank Cholesky factorization (Section \ref{chol}) and adds to it two adaptive features, {\it pivoting} and {\it dynamic stopping}. Pivoting determines the order in which knots are selected from an initial set of candidate points while dynamic stopping determines how many knots to select. The resulting approximation meets a pre-specified accuracy bound (Section \ref{pds}). 

The connection between predictive process approximation and reduced rank Cholesky factorization is well known \citep{candela&williams05} and pivoting has been recently investigated in this context from the point of view of stable computation \citep{foster&etal09}. The novelty of our work lies in unifying these ideas to define an adaptive version of the predictive process and in proposing accuracy bounds as the driving force in finessing the implementation of such approximation techniques. The end product is a substantial automatization of fitting Gaussian process models that can broaden up the scope of such models without the additional burden of having to model or learn the knots. This is illustrated with two examples in Section \ref{applic}.

\section{Predictive process approximation}

\subsection{Definition}
\label{pp}
% One way to overcome this is to replace the Gaussian process by a finite rank, knot based approximation of it. % that lives in the same function space
%. To be precise, let $\omega = (\omega(t), t \in T)$ be a zero mean Gaussian process defined on a compact Euclidean subspace $T$ with a covariance function $\psi(t, s) = \ex \omega(t) \omega(s)$. 
Fix a set of points, referred to as {\it knots} hereafter, $\{t_1, t_2,\cdots, t_m\} \subset T$ and write $\omega(t) = \nu(t) + \xi(t)$, where, 
\[
\nu(t) = E\{\omega(t) \;|\; \omega(t_1),\cdots,\omega(t_m)\}
\]
and $\xi(t) = \omega(t) - E\{\omega(t) | \omega(t_1),\cdots,\omega(t_m)\}$. By the properties of Gaussian process laws, $\nu$ and $\xi$ are independent Gaussian processes. The process $\nu$, called a Gaussian predictive process, has rank $m$, because it can be written as $\nu(t) = \sum_{i = 1}^m A_i \psi(t_i,t)$ with the coefficient vector $A = (A_1, \cdots, A_m)$ being a Gaussian vector. By replacing $\omega$ with $\nu$ in the statistical model, one now deals with the covariance matrix of $V = (\nu(s_1), \cdots, \nu(s_N))$, which, due to the rank-$m$ property of $\nu$, can be factorized in $O(Nm^2)$ time. 

\subsection{Accuracy bound}
\label{accuracy}

Replacing $\omega$ with $\nu$ can be justified as follows. Let $\delta = \sup_{t\in T} \min_{1\le i\le m} \rho(t, t_i)$ denote the mesh size of the knots, where $\rho(t, s) = [E\{\omega(t) - \omega(s)\}^2]^{1/2}$ is the canonical metric on $T$ induced by $\omega$ \citep[][page 2]{adler90}. For a smooth $\psi(t, s)$, $\delta$ can be made arbitrarily small by packing $T$ with sufficiently many, well placed knots. But, as $\delta$ tends to $0$, so does $\kappa^2 = \sup_{t\in T} \var \{\xi(t)\}$. This is because for any $t \in T$, and any $i \in \{1, \cdots, m\}$, by the independence of $\nu$ and $\xi$,
\begin{align*}
\var\{ \xi(t)\} & = \var\{ \omega(t)\} - \var\{ \nu(t)\}\\
& = \ex[ \var\{\omega(t) | \omega(t_1), \cdots, \omega(t_m)\}] \\
& = \ex[ \var \{ \omega(t) - \omega(t_i) | \omega(t_1), \cdots, \omega(t_m)\}] \\
& \le \var\{ \omega(t) - \omega(t_i)\} = \rho^2(t, t_i),
\end{align*}
%It is a well known fact that if $\xi$ is almost surely continuous then $\sup_{t \in T} |\xi(t)|$  has a density whose tails are bounded by that of a centered normal distribution. Such a property allows to provide accuracy guarantees of the approximation of $\omega$ by $\nu$ through upper bounds on $P(\sup_{t\in T}|\xi(t)| > \epsilon)$. The upper bound involves the variance of the dominating normal distribution, which can be made arbitrarily close to a constant multiple of $\kappa^2$ for a universal constant. The following theorem gives a precise statement along this line where we use a dominating normal with variance $B^\kappa$, with $B$ being a constant that depends on the dimension and size of $T$ as well as the covariance function of $\omega$.
and hence $\kappa \le \delta$. That $\kappa$ can be made arbitrarily small is good news, because it plays a key role in providing probabilistic bounds on the residual process $\xi$.
\begin{thm}
\label{thm bound}
Let $\omega$ be a zero mean Gaussian process on a compact subset $T \subset \rline^p$. Let $\nu$ be a finite rank predictive process approximation of $\omega$ with residual process $\xi = \omega - \nu$.
\benum
\item[(i)] If $T \subset [a,b]^p$ and there is a finite constant $c > 0$ such that $\var\{\omega(s) - \omega(t)\} \le c^2 \|s - t\|^2$, $s, t \in T$ then 
\beq
P\left(\sup_{t \in T} |\xi(t)| > \epsilon\right) \le 3  \exp\left(-\frac{\epsilon^2 }{B^2\kappa}\right), \;\; \forall \epsilon > 0
\label{tail}
\eeq
with $B = 27\sqrt{2pc(b - a)}$  and $\kappa^2 = \sup_{t \in T} \var \{\xi(t)\}$.
\item[(ii)] For any finite subset $S \subset T$
\beq
P\left(\sup_{t \in S} |\xi(t)| > \epsilon\right) \le 3  \exp\left\{-\frac{\epsilon^2 }{9\kappa_S^2(2 + \log |S|)}\right\}, \;\; \forall \epsilon > 0
\label{tail2}
\eeq
where $|S|$ denotes the size of $S$  and $\kappa_S^2 = \sup_{t \in S} \var \{\xi(t)\}$.
\eenum
\end{thm}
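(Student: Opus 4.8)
The plan is to treat $\xi = \omega - \nu$ as the zero-mean Gaussian process it is, and to feed suitable estimates on its canonical metric $\rho_\xi(s,t) = [\ex\{\xi(s) - \xi(t)\}^2]^{1/2}$ into two classical facts about suprema of Gaussian processes, both available in \citet{adler90}: Dudley's metric-entropy bound on $\ex\sup_T|\xi|$, and the Borell--TIS concentration inequality for $\sup_T|\xi|$ about its mean. The one non-routine ingredient is that $\rho_\xi$ is controlled in two complementary ways. By the independence of $\nu$ and $\xi$ exploited already in Section~\ref{accuracy}, $\var\{\omega(s) - \omega(t)\} = \var\{\nu(s) - \nu(t)\} + \var\{\xi(s) - \xi(t)\}$, so $\rho_\xi^2(s,t) \le \var\{\omega(s) - \omega(t)\}$; and $\rho_\xi^2(s,t) \le (\sqrt{\var\xi(s)} + \sqrt{\var\xi(t)})^2 \le 4\kappa^2$. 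Under the hypothesis of (i) these combine to
\[
\rho_\xi^2(s,t) \;\le\; \min\{\,c^2\|s-t\|^2,\; 4\kappa^2\,\} \;\le\; 2c\kappa\,\|s-t\|,
\]
the last step by $\min\{a,b\}\le\sqrt{ab}$. This interpolation is the device that yields $\kappa$, rather than $\kappa^2$, in the exponent: it says $\xi$ has sub-Gaussian increments with respect to the metric $d(s,t) = \sqrt{2c\kappa}\,\|s-t\|^{1/2}$, whose $T$-diameter is of order $\sqrt{c(b-a)\kappa}$.

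For part (ii) none of this is needed. Over the finite set $S$, $\sup_S|\xi|$ is the maximum of at most $2|S|$ centered Gaussians each of variance $\le\kappa_S^2$, so a union bound together with the Gaussian tail $P(|Z|>x)\le e^{-x^2/2}$ gives $P(\sup_S|\xi|>\epsilon) \le |S|\,e^{-\epsilon^2/(2\kappa_S^2)}$. For large $\epsilon$ this beats the claimed bound once the factor $|S|$ is absorbed into the exponent at the cost of the $\log|S|$ term; for small $\epsilon$ the claimed bound already exceeds $1$. The constants $9$ and $2+\log|S|$ are chosen precisely so that these two regimes overlap. (One could equally run Dudley/Borell--TIS with the crude covering number $N(u,S,\rho_\xi)\le|S|$ and $\mathrm{diam}_{\rho_\xi}(S)\le 2\kappa_S$, then weaken the resulting exponent in the same way.)

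For part (i) the index set is a continuum, so the net needed to resolve $\xi$ grows as $\kappa\downarrow 0$, and this is where the $d$-metric bound earns its keep. A grid covering of $[a,b]^p$ gives $N(u,T,d) = N\!\big(u^2/(2c\kappa),\,T,\,\|\cdot\|\big)$, and after the substitution $u\mapsto u/\mathrm{diam}_d(T)$ the factor $\sqrt{\log(\cdot/u^2)}$ in Dudley's integrand integrates to a universal constant, so $\int_0^{\mathrm{diam}_d(T)}\sqrt{\log N(u,T,d)}\,du$ is at most a universal multiple of $\sqrt{p\,c(b-a)\kappa}$. Hence Dudley's inequality gives $\ex\sup_T|\xi| \le C_0\sqrt{p\,c(b-a)\kappa}$ (with the analogous bound for the symmetrized process $\{\pm\xi(t)\}$), while Borell--TIS gives $P(\sup_T|\xi|>\ex\sup_T|\xi|+u)\le e^{-u^2/(2\kappa^2)}$, since $\sup_T\var\xi(t)=\kappa^2$.

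It remains to fold these into one bound valid for every $\epsilon>0$, and this, rather than any individual estimate, is the main obstacle. For $\epsilon$ exceeding a fixed multiple of $\ex\sup_T|\xi|$ one has $u\ge\epsilon/2$ in Borell--TIS and hence $P(\sup_T|\xi|>\epsilon)\le e^{-\epsilon^2/(8\kappa^2)}$, which is at most $e^{-\epsilon^2/(B^2\kappa)}$ because $\kappa\le\mathrm{diam}_\rho(T)\le c(b-a)\sqrt p$ is small compared with $B^2$; for smaller $\epsilon$, $3e^{-\epsilon^2/(B^2\kappa)}\ge 1$ and the bound is automatic. The point of the interpolation step is that it pins $\ex\sup_T|\xi|$ at the scale $\sqrt{c(b-a)\kappa}$, the same scale as the threshold $B\sqrt{\kappa\log 3}$ of the trivial regime, so the two regimes overlap; without it the mean would sit at the incommensurate scale $\kappa\sqrt{\log(1/\kappa)}$ and leave a gap. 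The explicit value $B=27\sqrt{2pc(b-a)}$, and the constant $3$ (one unit for each sign of $\xi$, with slack), are what emerge after carrying the absolute constants of Dudley's inequality, Borell--TIS and the Gaussian tail through this bookkeeping; the arithmetic is routine. An alternative to Dudley-plus-Borell--TIS is a single-pass chaining tail bound in the metric $d$, which reaches the same conclusion.
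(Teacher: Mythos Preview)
Your argument is correct, but the paper reaches the same conclusions by a shorter route that avoids both the interpolation trick and the Dudley-plus-Borell--TIS regime split. The paper works with the $\Psi$-Orlicz norm for $\Psi(x)=\tfrac13 e^{x^2}$ and applies a single chaining lemma (Lemma~3.4 of Pollard's \emph{Mini-Empirical} notes), which bounds $\|\sup_T|\xi|\|_\Psi$ directly by an entropy integral; the tail bound $P(|Z|>x)\le 3e^{-x^2/\|Z\|_\Psi^2}$ is then immediate, with no case analysis on $\epsilon$. The mechanism that produces $\kappa^{1/2}$ is also different from yours: rather than interpolating to a H\"older-$\tfrac12$ metric via $\min\{a,b\}\le\sqrt{ab}$, the paper keeps the Lipschitz bound $d(s,t)\le 1.5c\|s-t\|$ for the packing numbers but uses the variance bound only to cap the diameter at $\Delta\le 3\kappa$; then $\log(1+C/\epsilon)\le C/\epsilon$ turns the entropy integrand into $\epsilon^{-1/2}$, and $\int_0^{3\kappa}\epsilon^{-1/2}\,d\epsilon$ delivers the $\sqrt\kappa$. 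Choosing $t_0$ to be a knot kills the base term $\|\xi(t_0)\|_\Psi$ for free. The payoff is that the constant $B=27\sqrt{2pc(b-a)}$ drops out exactly from this computation, whereas in your approach the claim that precisely this $B$ ``emerges'' after matching the Dudley and Borell--TIS regimes is optimistic---you would get \emph{some} constant of the right form, but not obviously this one. For part~(ii) the two proofs are essentially the same: the paper feeds the crude bound $D(\epsilon,S,d)\le|S|$ into the sum form of the same chaining lemma, which is your union-bound argument in Orlicz dress.
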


A proof is given in Appendix \ref{appndx}. Note that the constant $B$ does not depend on the number or locations of the knots, it only depends on the dimensionality and size of $T$ as well as smoothness properties of the covariance function $\omega$. It is possible to replace $\kappa$ in (\ref{tail}) with $\kappa^{2(1  - \eta)}$ for any arbitrary $\eta \in (0, 1)$, but with a different constant $B$. While (\ref{tail}) provides an accuracy bound over the entire domain $T$, the bound in (\ref{tail2}) over a finite subset maybe of more practical value. 

For Gaussian process regression models with additive Gaussian noise, a common modification \citep{finley&etal09} of predictive process approximation is to replace $\omega$ with the process $\tilde \nu = \nu + \xi^*$ where $\xi^*$ is a zero mean Gaussian process, independent of $\nu$ and $\xi$, satisfying,
\[
\cov \{ \xi^*(t),\xi^*(s)\} = \left\{\begin{array}{ll} \cov\{\xi(t), \xi(s)\} = \var \xi(t) & \mbox{if } t = s\\
0 & \mbox{if } t \ne s.
\end{array}
\right.
\]
The addition of $\xi^*$ gives $\tilde \nu$ the same pointwise mean and variance as those of $\omega$, without adding to the computational cost. The residual process is now given by $\tilde \xi = \omega - \tilde \nu = \xi - \xi^*$ whose variance equals $2\var\{ \xi(t)\}$ because of independence between $\xi$ and $\xi^*$. Because $\xi^*$ is almost surely discontinuous, the bound in (\ref{tail}) does not apply to $\tilde \xi$. But (\ref{tail2}) continues to hold with $\kappa_S^2$ replaced by $\tilde\kappa_S^2 = 2\kappa_S^2$.

%For Gaussian process regression models, the above knotty approximation is closely related to various reduced rank approximations commonly found in the machine learning literature \citep{gpbook06}. However, the process representation of the approximation, as detailed above, holds a special appeal to a statistics audience, and perhaps explains why this approximation was independently {\it re}-discovered in the statistics community \citep{tokdar07, banerjee&etal08}. The approximation bound given above is derived

%
%Some comments about connection to subset of regressor, adding a nugget version of $\xi(t)$ to $\nu(t)$, RKHS, etc.

% Moreover, $V$ is a random element in the reproducing kernel Hilbert space (RKHS) of $\omega$, defined to be the closure of the linear space of functions $f_{a_{1:k}, u_{1:k}}(t) = \sum_{i = 1}^k a_i \psi(u_i, t)$, equipped with the inner product $\langle f_{a_{1:k}, u_{1:k}}, f_{b_{1:l}, v_{1:l}} \rangle = \sum_{i = 1}^k\sum_{j = 1}^{l} a_ib_j\psi(u_i, v_j)$. 
 
 %If the application required tracking $\omega$ at $N$ locations $s_1, \cdots,s_N$, then $(V(s_1),\cdots,V(s_N))$ is taken as a proxy of this. This vector is Gaussian, but its covariance matrix can be factorized in $O(Nm^2)$ time.

\subsection{Need for adaptation}
\label{why}
For predictive process approximations, choosing the number and the locations of the knots remains a difficult problem. Ideally this choice should adapt to the canonical metric $\rho$, so that a small $\delta$ obtains with as few knots as possible. However, it's not a single $\rho$ that we need to adapt to. In modern Gaussian process applications, the covariance function $\psi$ and consequently the canonical metric depend on additional model parameters. A typical example is $\omega$ of the form
\[
\omega(t) = \omega_0(t) + x_1(t)\omega_1(t) + É + x_p(t)\omega_p(t)
\]
where $x_j(t)$'s are known, fixed functions and $\omega_j$'s are independent mean zero Gaussian processes with covariances $\psi_j(t, s) = \tau_j^2\exp(-\beta_j^2 \|s - t\|^2)$ with $\theta = (\tau_0,\tau_1, \cdots, \tau_p, \beta_0, \beta_1, \cdots, \beta_p)$ serving as a model parameter. Because a likelihood based model fitting will loop through hundreds or even thousands of values of $\theta$, it is important to have a low-cost algorithm to choose the knots that automatically adapts to the geometry of any arbitrary canonical metric.

Such an adaptive feature is lacking from existing knots selection approaches which primarily treat knots as additional model parameters. The knots are then learned along with other model parameters, either via optimization \citep{snelson&ghahramani06} or by Markov chain sampling \citep{tokdar07, guhaniyogi.etal.11}. Another popular approach is to work with a fixed set of knots based on space-filling design concepts \citep{zhu.stein.05, zimmerman.06, finley&etal09}. Among these, the proposal in \citet{finley&etal09} overlaps with our proposal. But while we pursue a low-cost adaptation at every value of $\theta$ at which likelihood evaluation is needed, \citet{finley&etal09} consider one fixed set of knots adapted to a representative value of $\theta$. Their knot selection algorithm has $O(N^2m)$ computing time, which makes it infeasible to run within an optimization or a Markov chain sampler loop.

\section{Adaptative predictive process}
\label{app}
\subsection{Predictive process approximation via  Cholesky factorization}
\label{chol}

Let $\omega = (\omega(t), t \in T)$ be a zero-mean Gaussian process with covariance $\psi(s, t)$. Suppose the finite set $S = \{s_1, s_2, \cdots, s_N\} \subset T$ contains all points in $T$ where $\omega$ needs to be evaluated for the purpose of model fitting. Let $\Psi = ((\psi_{ij}))$ denote the $N\times N$ covariance matrix of the Gaussian vector $W = (\omega(s_1), \cdots, \omega(s_N))$. A Cholesky factor $\Lambda$ of $\Psi$, with $\Lambda$ being a $N \times N$ upper triangular matrix with non-negative diagonal elements and satisfying $\Psi = \Lambda'\Lambda$, obtains from the following recursive calculations
\beq
\label{recursion}
\lambda_{ii} = \sqrt{\psi_{ii} - \sum_{\ell < i} \lambda_{\ell i}^2},\;\; \lambda_{ij} = \frac{\psi_{ij} - \sum_{\ell < i} \psi_{\ell j}\psi_{\ell i}}{\lambda_{ii}},\;\; i = 1, \cdots, N, j = i + 1, \cdots, N
\eeq
with $\lambda_{ij}$, $j < i$ set to zero. This gives a row-by-row construction of $\Lambda$ and requires $C_M Ni^2$ computation time for constructing the first $i$ rows for some machine dependent constant $C_M$.

For an $m \in \{1, \cdots, N\}$, an approximation $\Lambdam = ((\lambdam_{ij}))$ to $\Lambda$ obtains in $O(Nm^2)$ time by an incomplete application of the above recursion. The first $m$ rows of $\Lambdam$ are constructed in $C_MNm^2$ time through (\ref{recursion}):
\beq
\label{incomp recursion}
\lambdam_{ii}  =  \sqrt{\psi_{ii} - \sum_{\ell < i} \{\lambdam_{\ell i}\}^2},\;\; \lambdam_{ij} = \frac{\psi_{ij} - \sum_{\ell < i} \psi_{\ell j}\psi_{\ell i}}{\lambdam_{ii}},\;\; i = 1, \cdots, m, j = i + 1, \cdots, N. 
\eeq
For the remaining rows, only the diagonal elements are computed in $C_M(N - m)$ time as in the first part of (\ref{recursion}), with the off diagonals set to zero:
\beq
\label{diag aug}
\lambdam_{ii}  =  \sqrt{\psi_{ii} - \sum_{\ell \le m} \{\lambdam_{\ell i}\}^2},\;\; \lambdam_{ij} = 0,\;\;i = m + 1, \cdots, N, j = i + 1, \cdots, N. 
\eeq
The lower triangular elements $\lambdam_{ij}$, $j < i$, are all set to 0. The resulting $\Lambdam$ is upper triangular with non-negative diagonals and equals the Cholesky factor of the covariance matrix $\Psim$ of $V = (\nu(s_1), \cdots, \nu(s_N))$ where $\nu = (\nu(t), t \in T)$ is the Gaussian predictive process approximation of $\omega$ based on knots $s_1, \cdots, s_m$. The resulting residual process $\xi = \omega - \nu$ satisfies $\var \{\xi(s_i)\} = 0$, $1 \le i \le m$ and $\var \{\xi(s_i)\} = \{\lambdam_{ii}\}^2$, $i = m + 1, \cdots, N$, and hence $\kappa_S^2 := \max_{s \in S} \var \{\xi(s)\} = \max_{i > m} \lambdam_{ii}^2$. From (\ref{tail2}), $\kappa_S^2$ controls error bounds $P(\max_{s \in S} |\omega(s) - \nu(s)| > \epsilon)$ over the set of interest $S$.

Therefore, the above incomplete Cholesky factorization produces a Gaussian predictive process approximation, with readily available error bounds,  provided we are happy to choose the knots from the set $S$. The restriction to $S$ appears reasonable for most applications with the additional burden on the modeler to identify $S$ carefully. For example, in a Gaussian process regression model with additive noise, it is sufficient to take $S$ to be the training set of covariate values, if only posterior predictive mean and variances are needed at test cases. But if posterior predictive covariance between two test cases, or a test and a training case is required, then $S$ should include these test cases as well.

\subsection{Pivoting and dynamic stopping}
\label{pds}
Our quest of an adaptive version of $\nu$ stays within this restriction, but employs a dynamic choice of the stopping time $m$ and the order in which the elements of $S$ are processed.  To decide whether the current stopping time is acceptable, we check the current $\kappa_S$ against a given tolerance $\kappa_\tol$. If $\kappa_S$ exceeds the tolerance, we increment $m$ to $m + 1$, and repeat (\ref{incomp recursion}) only for $i$ equal to the new value of $m$, followed by (\ref{diag aug}), producing an update of $\kappa_S$. The top row elements $\lambdam_{ij}$, $i < m$ need no changes.

The increment of $m$ and the subsequent alterations to $\Lambdam$ clearly reduce $\kappa_S$ as the tailing $\lambdam_{ii}$'s in (\ref{diag aug}) are reduced. This reduction can be expected to improve if after incrementing $m$ and before proceeding with the new calculations, one swaps the current $m$-th and $k$-th rows of $\Psi$ where $i = k$ gives the maximum of the tailing $\lambdam_{ii}$ values, $i = m, \cdots, N$. A sequence of such swaps, from start to the terminating $m$, gives a greedy, dynamic approximation to finding the optimal ordering of the elements of $S$ that gives a $\kappa_S \le \kappa_\tol$ with a minimum stopping time $m$. 

The dynamic swapping is a common feature, known as {\it pivoting}, of all leading software packages for Cholesky factorization. If run until $m = N$, pivoting produces a permutation $\pi = (\pi_1, \cdots, \pi_N)$ of $(1, \cdots, N)$ and an upper triangular matrix $\Lambda$ with non-negative diagonals such that $P_\pi \Psi P_{\pi}' = \Lambda'\Lambda$ where $P_\pi$ is the $N\times N$ permutation matrix associated with $\pi$. Our proposal above simply adds to this pivoted Cholesky factorization a dynamic, tolerance based stopping. The resulting $\Lambdam$ gives the Cholesky factor of the covariance matrix of $V = (\nu(s_1), \cdots, \nu(s_N))$ where $\nu$ is the Gaussian predictive process associated with the knots $s_{\pi_1}, \cdots, s_{\pi_m}$. The Gaussian predictive process $\nu$ comes with the error bound (\ref{tail2}) with $\kappa_\tol$ replacing $\kappa$. The additional computing time needed for pivoting is only $O(Nm)$, a small fraction of the computing time $O(Nm^2)$ needed to get the elements $\Lambdam$ if $\pi$ was precomputed. 

Algorithm \ref{A:spchol} gives a pseudo code for performing the incomplete Cholesky factorization with an additional improvisation. The user specified tolerance $\kappa_\tol$ is taken to be a relative tolerance level instead of an absolute one. The absolute tolerance is fixed as $\kappa_\tol$ times the maximum of $\var\{\omega(s_i)\}^{1/2}$, $i = 1, \cdots, N$. This makes sense for Gaussian process approximation as the maximum variance of the process can be viewed as a scaling parameter.

\begin{algorithm}
\caption{Pivoted, incomplete Cholesky factorization with dynamic stopping.}
\label{A:spchol}
\begin{algorithmic}
\REQUIRE {A covariance function $\psi(\cdot, \cdot)$, positive integers $N$ and $m_{\max}$ and tolerance $\kappa_\tol > 0$.}
\STATE $R \gets 0_{m_{\max} \times m_{\max}}$
%\STATE $d \gets (\alpha(1,1), \alpha(2,2), \cdots, \alpha(n,n))$
\STATE $\pi_1 \gets 1, \pi_2 \gets 2, \cdots, \pi_N \gets N$
\STATE $k \gets 1$
\STATE $d_{\max} \gets \max_{1 \le l \le N} \psi(s_{\pi_l}, s_{\pi_l})$
\STATE $l_{\max} \gets \arg\max_{1 \le l \le N} \psi(s_{\pi_l}, s_{\pi_l})$
\STATE $\kappa_\tol \gets \sqrt{d_{\max}}\kappa_\tol$
\WHILE{$d_{\max} > \kappa_\tol^2$}
\STATE {\bf swap} $\pi_k$ and $\pi({l_{\max}})$
\STATE $r_{kk} \gets d_{\max}^{1/2}$
\FOR{$j = k + 1$ to $m$}
\STATE $r_{kj} \gets \{\psi(s_{\pi_k}, s_{\pi_j}) - \sum_{l < k} r_{lk} r_{lj}\} / r_{kk}$
\ENDFOR
\STATE $k \gets k + 1$
\STATE $d_{\max} \gets \max_{k \le l \le N} \psi(s_{\pi_l}, s_{\pi_l}) - \sum_{l < k} r_{lk}^2$
\STATE $l_{\max} \gets \arg\max_{k \le l \le N} \psi(s_{\pi_l}, s_{\pi_l}) - \sum_{l < k} r_{lk}^2$\\
\ENDWHILE
\STATE $m \gets k$
\FOR{$k = m + 1$ to $N$}
\STATE $r_{kk} \gets \{\psi(s_{\pi_k}, s_{\pi_k}) - \sum_{l \le m} r_{lm}^2\}^{1/2}$
\ENDFOR
\RETURN Factor matrix $R$, pivot $\pi$ and rank $m$
\end{algorithmic}
\end{algorithm}

\subsection{Geometric Illustration}

We illustrate the adaptive choice of knots on the Bartlett experimental forest dataset \citep{finley&etal09}. This dataset contain $n = 437$ well identified forest locations $s_1, \cdots, s_n$, measured as eastings (on the horizontal axis) and northings (vertical axis) from a reference point (Figure \ref{fig 1}). For illustration purposes, we consider only a hypothetical Gaussian process model where a surface $\omega(s)$ over the forest area is modeled as a zero-mean Gaussian process with a square-exponential covariance function
\beq \psi(s, t) = x(s) x(t) \exp(-\beta\|Q (s - t)\|^2) \label{bef cov},\eeq
for some constant $\beta > 0$, an orthogonal projection matrix $Q$ and some fixed function $x(t)$ that relates to the slope of the forest landscape at location $t$. The variance of $\omega(t)$ is $x(t)^2$ and the correlation between $\omega(s)$ and $\omega(t)$ depends on the Euclidean distance between the $Q$-projections of these two location vectors. The parameter $\beta > 0$ encodes the spatial range of the covariance, i.e., it controls how rapidly $\psi(s, t)$ decays with the distance between $s$ and $t$. We demonstrate how the choice of knots according to Algorithm \ref{A:spchol} adapts to variations in each of $\beta$, $x(t)$ and $Q$. 

\bfig
\centering
\subfigure[\label{1a}]{\graphsc{0.75}{scaleBEFb}} 
\subfigure[\label{1b}]{\graphsc{0.75}{varBEFb}} 
\subfigure[\label{1c}]{\graphsc{0.75}{tolBEFb}} 
\subfigure[\label{1d}]{\graphsc{0.75}{rankBEF}}
\caption{Geometry of knot selection illustrated on Bartlett experimental forest data. (a) Adaptation of knots to changes in $Q$ (columns) and $\beta$ (rows). (b) The same for changes in $Q$ (columns) and $x(s)$ (rows). (c) The same for changes in $\kappa_\tol$. (d) Number of knots needed to meet specified accuracy bounds for a particular choice of the covariance.}
\label{fig 1}
\efig

Figure \ref{1a} shows nine choices of (\ref{bef cov}) that vary in $Q$ and $\beta$, while $x(t)$ and $\kappa_\tol$ are held fixed at $x(t) \equiv 1$ and $\kappa_\tol^2 = 10^{-4}$. The top row has $\beta = 10^{-3}$, the middle row has $\beta = 5\times 10^{-4}$ and the bottom row has $\beta = 10^{-4}$. The left column has $Q$ equal to the identity matrix, the middle column has $Q$ that projects along the horizontal axis and the right column has $Q$ that projects along the vertical axis (indicated by arrows). It is clear that the algorithm picks fewer knots as the spatial range decreases. A smaller spatial range gives a flatter topology in the canonical metric, consequently fewer points are required to capture the variation in the random surface $\omega(t)$. It is also clear that the algorithm adapts to the directional element of this topology. It lines up the knots along the horizontal axis when $Q$ is the horizontal projection and lines up the knots along the vertical axis when $Q$ projects along that axis.

Figure \ref{1b} shows nine other choices of (\ref{bef cov}) that vary in $Q$ and $x(t)$, while $\beta$ and $\kappa_\tol$ are held fixed at $\beta = 10^{-3}$ and $\kappa_\tol^2 = 10^{-4}$. Variation in $Q$ is as in Figure \ref{1a}. We take $x(t) = 1 - F(c \cdot \texttt{slope}(t) | a, b)$, where $F(x|a, b)$ denotes the gamma distribution function with shape parameter $a$ and rate parameter $b$, and $\texttt{slope}(t)$ is the slope of the landscape at point $t$. We fix $a$ and $b$ so that the mean and variance of the gamma distribution match the mean and variance of the recorded slope values at the 437 locations. The top row of Figure \ref{1b} has $c = 0$, the middle row has $c = 1$ and the bottom row has $c = 6$. Larger values of $c$ make $x(t)$ closer to zero at regions with a high slope while $x(t)$ always equals 1 at regions that are flat (shown in lighter color, mostly along a narrow valley running from south-east to north-west). 
With a larger $c$, most of the variation in $\omega(t)$ is confined to locations $t$ with a flat slope. It is clear that our algorithm adapts to this feature by picking knots from such areas.

Figure \ref{1c} shows (\ref{bef cov}) with $\beta = 10^{-4}$, $x(t) \equiv 1$, $Q =$ the identity matrix, but with four choices of $\kappa_\tol^2 = 10^{-1}, 10^{-2}, 10^{-4}$ and $10^{-8}$ (clockwise from top left). For smaller tolerance levels, more knots are picked to meet a tighter accuracy condition. A good coverage of the entire region is maintained throughout, but additional knots are chosen to give a denser representation. Note the higher concentration of knots at the boundary than the interior. This is a consequence of the greedy nature of the algorithm as it tries to pick the next knot as the point that is least correlated with the ones already selected. Although a better algorithm could correct for such a boundary bias, the linear additional computing cost of the greedy search offers a highly attractive trade-off against a few extra knots. Figure \ref{1d} indicate that the number of knots $m$ required to meet the tolerance criterion grows at a logarithmic rate with $1 / \kappa_\tol^2$. However, theoretical results are not yet available on the relationship between $m$ and $\kappa_\tol$.

\section{Application to Bayesian computation}
\label{applic}
\subsection{Sparse nonparametric regression}

A low cost, adaptive choice of knots is extremely beneficial in Bayesian Markov chain Monte Carlo \citep{mcmcbook1, mcmcbook2} computations, where likelihood evaluation is needed at thousands of different values of the covariance parameters. We illustrate this with a sparse non-parametric regression model, where a good exploration of the space of covariance parameters is critical to model fitting. We show that to efficiently explore the covariance parameter space, it is important to adapt to the changes in the canonical metric caused by the changes in these parameter values. In this regard, the adaptive predictive process approximation proposed here has a clear advantage over the existing approaches of handling knots.

%two regression based models with additive Gaussian noise. In either model, the GP covariance function is controlled by a moderately high dimensional parameter. Different values of this parameter can produce substantially different canonical metrics. Consequently, knot selection methods that do not automatically adapt to the canonical metric cannot efficiently explore the model space. This applies to the approach of fixing a set of knots at the outset, as well as to the approach where knots are treated as model parameters to be learned through MCMC. 

%The adaptive predictive process approximation proposed here does not suffer from this problem. Moreover, it leads to substantial automatization in implementing the GP model. All the user needs to do is to specify a tolerance $\kappa_\tol$, no additional model for the knots is required. The additional pivoting step is easily implemented and often comes prepackaged in Cholesky factorization subroutines of linear algebra packages. 

We consider a toy data set consisting of $(x_i, y_i)$, $i = 1, \cdots, n = 10,000$, where $x_i = (x_{i1}, \cdots, x_{ip})'$ are drawn independently from the uniform distribution over $[0,1]^p$, with $p = 10$, and $y_i$ are generated independently as $y_i \sim N(2\sin \{2\pi x_{i1}\}, 0.1^2)$. We consider a regression model
\beq
y_i = \mu + \tau \omega(x_i) + \tau\epsilon_i, \;\;\epsilon_i \iid N(0, \sigma^2),
\eeq
with $\omega$ modeled as a zero-mean Gaussian process over $T = [0,1]^p$. The covariance function of $\omega$ is taken to be
$$\psi(s, t) = \psi(s, t | \beta) = \exp\{-\sum_{j = 1}^p \beta_j^2 (s_j - t_j)^2\},\;\; t, s \in T.$$
For simplicity of exposition, we set $\mu$ and $\tau^2$ at the mean and variance of the observed $y_i$ values, and focus on learning the parameters $\beta_1, \cdots, \beta_p$ and $\sigma^2$. The $\beta_j$ parameters are assigned standard normal prior distributions, folded onto the positive real line and $\log \sigma^2$ is assigned a standard normal prior distribution. Prior independence across parameters is assumed.

A fixed set of knots over $T$ is clearly infeasible for this application due to the dimensionality of $T$. Placing only 3 knots along each axis takes the total count to $3^{10} = 59049$. While the alternative approach of placing an auxiliary model on the knots and learning them jointly with the covariance parameters via Markov chain sampling can drastically reduce the total number of knots, it is likely to lead to a poor exploration of the covariance parameters for the following reason.

\bfig
\centering
\graphsc{0.75}{ff}
\caption{Conditional posterior density of $\beta_1$, in the non-parametric regression model, given other parameters and the knots, for two different choices of the knots.}
\label{fig1.5}
\efig

Consider a Gibbs update for $\beta_1$ given the remaining parameters and the knots. Suppose the current parameter values are $\beta_1 = \beta_2 = 0.1$, $\beta_3 = \cdots = \beta_p = 0.004$ and $\sigma^2 = \exp(-3.5)$. For these parameter values, a ``well learned'' choice of the knots is found by applying Algorithm \ref{A:spchol} to $\psi(s, t |\beta)$ with $\beta$ set at the vector of current values (we use $\kappa_\tol^2 = 10^{-4}$). The corresponding posterior conditional density $p(\beta_1 | \beta_2, \cdots, \beta_p, \sigma^2, \knots, \data)$ is shown by the solid curve in Figure \ref{fig1.5}. If, instead, the current values had been $\beta_1 = 1$, $\beta_2 = 0.1$, $\beta_3 = \cdots = \beta_p = 0.004$, $\sigma^2 = \exp(-3.5)$ and the knots were chosen by applying Algorithm \ref{A:spchol} to the corresponding $\psi(s, t | \beta)$, then the resulting posterior conditional density $p(\beta_1 | \beta_2, \cdots, \beta_p, \sigma^2, \knots, \data)$ would look like the dashed curve in Figure \ref{fig1.5}. The two curves are very different even though the values of the conditioning parameters $\beta_2, \cdots, \beta_p$ and $\sigma^2$ are the same. This difference shows that the conditional distribution of the covariance parameters strongly depends on the current set of knots, which, if well learned, should depend on the current values of the covariance parameters. Consequently, there will be additional stickiness in the chain of sampled values of the covariance parameters.

\bfig
\centering
\subfigure[]{\graphsc{0.75}{traceToyb}}
\subfigure[]{\graphsc{0.75}{histToyb}}
\subfigure[]{\graphsc{0.75}{fitToyb}}
\subfigure[]{\graphsc{0.75}{rankToy}}
\caption{A random walk Metropolis sampler exploration of the posterior for the non-parametric regression problem and associated posterior summaries. (a) Trace plots of the sampled values of $\beta_1, \cdots, \beta_{10}, \sigma$. (b) Histograms of the same. (c) Posterior median (solid line) and 95\% credible intervals (dashed lines) for $f(x)$ at $x = (i/100, 0, \cdots, 0)$, $i = 0, \cdots, 100$, overlaid on the data scatter along $x_1$ and $y$. (d) Values of $m$ along the run of the sampler, shown at every 100th sweep.}
\label{fig 2}
\efig

The proposed adaptive predictive process gets rid of this additional stickiness by automatically adapting the knots to the covariance parameter values. As discussed before, for this application it is reasonable to restrict the search of knots to the set of observed covariate vectors $S = \{x_1, \cdots, x_n\}$. Then, one only needs to specify a tolerance level $\kappa_\tol$ to obtain an approximation $\hat p(y | \beta, \sigma^2)$ of the marginal likelihood $p(y | \beta, \sigma^2) = \int p(y | \omega, \sigma^2) p(d\omega | \beta)$ of $(\beta, \sigma^2)$. The approximation obtains by replacing $\omega$ in the integral with its predictive process approximation $\nu$ adapted to the corresponding $\psi(s, t|\beta)$. The approximate marginal likelihood can be computed in $O(nm^2)$ time, where $m$ is the number of knots needed for $\psi(s,t|\beta)$ with the given tolerance level; detailed formulas are available in \citet{snelson&ghahramani06}. The approximate posterior density $\hat p(\beta, \sigma^2 | \data) \propto \hat p(y | \beta, \sigma^2) p(\beta, \sigma^2)$ can be explored with common Metropolis-Hastings samplers. Figure \ref{fig 2} reports summaries from a random walk Metropolis sampler exploration with a tolerance level $\kappa_\tol^2 = 10^{-4}$.

\subsection{Varying coefficient regression for spatial data}

\citet{pace.barry.97} use county level data from 1980 United States presidential election to relate voter turnout to education, income and homeownership standards. They use a spatial autoregressive model to allow this relation vary geographically. We pursue an alternative formulation with a Gaussian process spatial regression model. 

For county $i = 1, \cdots,n$, let $V_i$, $P_i$, $E_i$, $I_i$ and $H_i$ denote, respectively, voter count, population size of age 18 years or more, population size with at least high school education, aggregate income and number of owner occupied housing units. We take log-percentage voter count $y_i = \log(V_i / P_i)$ as the response variable. Three predictor variables are defined as $x_{1i} = \log (E_i / P_i)$, $x_{2i} = \log (I_i / P_i)$ and $x_{3i} = \log (H_i / P_i)$. 

We relate the response to the predictors through a spatially varying regression model
\beq
y_i = \omega_{0i} + x_{1i}\omega_{1i} + x_{2i}\omega_{2i} + x_{3i}\omega_{3i} + \epsilon_i, \;\;\epsilon_i \iid N(0, \sigma^2).
\eeq 
To ensure that geographically proximate counties have similar coefficients, the vector of coefficients from all counties $(\omega_{j1},\cdots, \omega_{jn})$, for each $j = 0, 1,2, 3$, is taken to be a zero mean multivariate normal with $\cov(\omega_{ji}, \omega_{jk}) = \tau_j^2 \exp\{-\beta_j^2\|t_i - t_k\|^2\}$, where $t_i$ is the spatial location vector of county $i$, given by the latitude-longitude pair of the county centroid. These four vectors are taken to be mutually independent.

The above formulation is equivalent to 
\beq
y_i = \omega(t_i) + \epsilon_i,\;\;\epsilon_i \iid N(0, \sigma^2)
\eeq
where $\omega(t)$ is a zero-mean Gaussian process on $T = \{t_1, \cdots, t_n\}$ with covariance function $\psi(s, t | \theta) = \sum_{j = 0}^3 x_j(t)x_j(s) \exp\{-\beta_j^2\|s - t\|^2\}, s, t \in T$, with $x_0(t_i) \equiv 1$, $x_j(t_i) = x_{ji}$, etc, that depends upon the parameter vector $\theta = (\tau_0, \tau_1, \tau_2, \tau_3, \beta_0, \beta_1, \beta_2, \beta_3)$. These covariance parameters are each assigned a standard normal prior distribution, folded onto the positive half of the real line, independently of each other. We also assign $\log\sigma^2$ a standard normal prior distribution, and $\log\sigma^2$ is taken to be a priori independent of $\theta$.

\bfig
\centering
\subfigure[\label{3a}]{\graphsc{0.75}{traceE80b}}
\subfigure[\label{3b}]{\graphsc{0.75}{histE80b}}
%\subfigure[]{\graphsc{0.75}{predE80gsc5.jpg}}
\subfigure[\label{3c}]{\graphsc{0.75}{knotsE80b}}
\subfigure[\label{3d}]{\graphsc{0.75}{rankE80}}
\caption{A random walk Metropolis exploration of the posterior for the election turnout analysis. (a) Trace plots of model parameters. (b) Histograms of the same. (c) Knot locations at two distant sweeps of the sampler. (d) Values of $m$ along the sampler.}
\label{fig 3}
\efig

We fit this model using 1040 randomly chosen counties, roughly one third of the total count. Figure \ref{fig 3} shows a random walk Metropolis sampler exploration of the approximate posterior density $\hat p(\theta, \sigma^2 | y) \propto \hat p(y | \theta, \sigma^2) p(\theta)p(\sigma^2)$ with $\kappa_\tol^2 = 10^{-4}$ and $S$ taken to be the set of locations for the counties included in model fitting. Trace plots in Figure \ref{3a} indicate good convergence of the sampler. The marginal posterior distributions of the model parameters appear unimodal (Figure \ref{3b}). Figure \ref{3c} shows knots locations from two different sweeps of the sampler. It is interesting that the knots are not uniformly distributed over the whole country. The sampled values of $m$, shown in Figure \ref{3d} indicate that unlike the regression example of the previous section a substantial fraction ($\sim15\%$-35\%) of points from $S$ are needed to meet the accuracy bound.

\bfig
\centering
\subfigure[\label{4a}]{\graphsc{0.75}{predE80b}}
\subfigure[\label{4b}]{\graphsc{0.75}{predlineE80}}
\subfigure[\label{4c}]{\graphsc{0.75}{coefE80b}}
\subfigure[\label{4d}]{\graphsc{0.75}{sizeE80b}}
\caption{Posterior summary for the election turnout analysis. (a) Observed (bottom) and predicted (top) percentage turnout shown by counties. (b) Scatter plot of the same, green dots in the background mark training data while the red dots in the foreground are for held out data. (c) Estimated coefficients for the three predictors shown by counties. (d) Effect sizes of the three predictors shown by counties.}
\label{fig 4}
\efig

Figure \ref{fig 4} shows summaries of our model fit. Figure \ref{4a} shows the percentage voter turnout (bottom) and the predicted percentage turnout (top) for all 3106 counties. The predicted value for county $i$ is calculated as the Monte Carlo approximation to $\exp[\ex\{\omega(t_i) \mid \data\}]$. Figure \ref{4b} combines the values from Figure \ref{4a} into a scatter plot, split by counties included in model fitting (green dots) and the remaining ones (red dots). Figure \ref{4c} shows the spatially varying regression coefficients $\ex\{\omega_{ji} | \data\}$ on predictors $x_{ji}$, $j = 1,2,3$, for all counties $i$. Figure \ref{4d} shows the effect size of these coefficients, found by $ \ex\{\omega_{ji} | \data\} / \sqrt{\var\{\omega_{ji} | \data\}}$.

From Figure \ref{fig 4} we see that predictor $x_3$, which relates to home ownership, has a strong, spatially varying influence on voter turnout. This predictor has a positive coefficient for all counties, with larger values and effect sizes for counties in the southeast. Predictor $x_1$, which relates to education, has a moderate, spatially varying influence, with about 1/3 of the counties having an effect size larger than 2. It is interesting that coefficients of $x_1$ and $x_3$ appear to be inversely related to each other. Predictor $x_2$, which relates to income, has a weak influence, with effect size less than 2 in absolute value for more than 95\% of the counties. However, the spatial variation of the coefficient of $x_2$ is more intricate and wavy than that of the other two predictors. Although we do not try to interpret these variations, we note that spatial regression model indeed provides a better fit than an ordinary linear regression that relates $y$ to $x_1$, $x_2$ and $x_3$. The root mean square prediction error from the ordinary linear regression, calculated over the counties not included in model fitting, equals 0.14. The same statistic for the spatial regression model is 0.11. 

%Also note that our Markov chain sampler does not sample the individual $\omega_{ji}$ values. To obtain the posterior mean and variance of these coefficients we use the following simple trick. The value of $\omega_{ji}$ is simply the predicted value of $\omega(t_i)$ for $x_j(t_i) = 1$ and $x_l(t_i) = 0$ for $l \ne j$. Given a sampled value of $(\theta, \sigma^2)$, and the corresponding sparse Cholesky factorization, 

\section{Discussion}

We have addressed the question of knot selection within the predictive process methodology and have offered proposals that can substantially automatize the implementation of Gaussian process models in Bayesian analysis. A key conceptual contribution of our work is the emphasis on error bounds to derive a finite rank approximation of an infinite dimensional Gaussian process. It must be noted that the accuracy bounds we provide are all {\it a priori}. That is, we can not provide an accuracy bound on how well the posterior distribution under the predictive process model approximates the posterior distribution under the original Gaussian process model. However, \citet{tokdar07} provides some useful theoretical calculation toward this.

We note that approaches that use an auxiliary model on knots are fundamentally different from our deterministic choice of knots driven by accuracy bounds. Our approach clearly stays within the limits of an approximating method. The smaller the specified tolerance, the closer we are to the original Gaussian process. Approaches with a model on the knots essentially define a different stochastic process. The new stochastic process could indeed provide a better model for the given task, as discussed in \citet{snelson&ghahramani06}. However, it would be erroneous to assume that the theoretical properties of a Gaussian process model would also apply to this new stochastic process model. 

It is important to realize that at the crux of a predictive process approximation is the rank deficiency of the covariance matrix of the Gaussian vector $W = (\omega(s_1), \cdots, \omega(s_N))$, which is a manifestation of the underlying smoothness of the process $\omega$. For Gaussian process models that employ a relatively un-smooth $\omega$, such as spatial autoregressive models for lattice data \citep{besag74, rue.etal.09}, predictive process may not be the ideal approximating tool. Indeed, in such cases, the covariance matrix of $W$ need not be rank deficient, but can have special banded structures that allow for other approximation techniques to yield $O(Nm^2)$ computing.

For a smooth $\omega$, the rank deficiency of the covariance matrix of $W$ poses an additional problem. The covariance matrix, irrespective of its size, can be ill conditioned, making numerical computations unstable. The dynamic, tolerance based stopping of the adaptive predictive process can solve this problem to a large extent, because the knot finding algorithm is likely to terminate before the covariance matrix of $(\omega(s_{\pi_1}), \cdots, \omega(s_{\pi_m}))$ becomes ill conditioned. 

\appendix
\section{Technical details}
\label{appndx}

\begin{proof}[Proof of Theorem \ref{thm bound}]
Let $\Psi(x) = \frac{1}{3} e^{x^2}$, $x \ge 0$. $\Psi(x)$ is increasing and convex with $\Psi(0) \in (0, 1)$. For any random variable $Z$ its $\Psi$-Orlicz norm \citep[page 3]{pollard.mono} is defined as $\|Z\|_\Psi := \inf\{C > 0: \ex \Psi(|Z| / C) \le 1\}$. Such a norm provides bounds on tail probabilities as follows: $P(|Z| > x) \le 1 / \Psi(x / \|Z\|_\Psi) = 3\exp(-x^2 / \|Z\|_\Psi^2)$ for all $x > 0$. This immediately leads to (\ref{tail}) and (\ref{tail2}) once we show $\|\sup_{t \in T} |\xi(t)|\|_\Psi \le B\kappa^{1/2}$ and $\|\sup_{t \in S} |\xi(t)\|_\Psi \le 3\kappa_S \sqrt{\log(2 + \log |S|)}$.

\benum
\item[(i)]
Lemma 3.4 of \citet{pollard.mono} states that for any $t_0 \in T$,
\begin{align}
\|\sup_{t \in T} |\xi(t)|\|_\Psi & \le \|\xi(t_0)\|_\Psi + \sum_{i = 0}^\infty \frac{\Delta}{2^i} \sqrt{2 + \log D(\frac{\Delta}{2^i}, T, d)} \label{entropy1}\\
& \le \|\xi(t_0)\|_\Psi + 9\int_0^\Delta \sqrt{\log D(\epsilon, T, d)} d\epsilon \label{entropy2}
\end{align}
where $D(\epsilon, T, d)$ is the $\epsilon$-packing number of $T$ under a (pseudo) metric $d(s, t)$ with $\Delta = \sup_{s,t} d(s,t)$, provided $\|\xi(s) - \xi(t)\|_\Psi \le d(s, t)$ for all $s,t \in T$. It is easy to calculate that $\|Z\|_\Psi = 1.5$ if $Z \sim N(0, 1)$ and that $\|Z\|_\Psi = 1.5 \sigma$ if $Z \sim N(0, \sigma^2)$. Therefore, for any $s, t \in T$, $\|\xi(s) - \xi(t)\|_\Psi  = 1.5 [\var\{\xi(s) - \xi(t)\}]^{1/2}$. Therefore (\ref{entropy1}) holds if we take $d(s,t) = 1.5 [\var \{\xi(s) - \xi(t)\}]^{1/2}$. 

To calculate the right hand side of (\ref{entropy1}), fix $t_0$ to be any of the knots used in defining $\nu$. Then $\xi(t_0) = 0$ and consequently the first term $\|\xi(t_0)\|_\Psi = 0$. To calculate the integral in the second term, furst note that  
$$d(s, t)^2 = 2.25\var\{\xi(s) - \xi(t)\} \le 2.25\var\{\omega(s) - \omega(t)\} \le 2.25 c^2\|s - t\|^2$$
where the first inequality holds because $\omega = \nu + \xi$ with $\nu$ and $\xi$ independent, and the second inequality follows from our assumption on $\omega$. Therefore $D(\epsilon, T, d) \le \{1 + 1.5 c(b - a) / \epsilon\}^d$. Next, bound the diameter $\Delta$ as follows
\beq \Delta^2 = 2.25\sup_{s,t \in T} \var\{\xi(s) - \xi(t)\} \le 2.25 \sup_{s,t\in T} 2 [\var\{\xi(s)\} + \var\{\xi(t)\}] \le 9 \kappa^2.\label{diam}\eeq
Now use $\log(1 + x) \le x$ to bound the right hand side of (\ref{entropy1}) by
$$9\int_0^{3\kappa} \sqrt{p\log\{1 + 1.5 c(b - a) / \epsilon\} }d\epsilon \le 9\sqrt{1.5 pc (b - a)} \int_0^{3\kappa} \epsilon^{-1/2} d\epsilon = B \kappa^{1/2}$$
as desired.
\item[(ii)]
Now, to calculate $\|\sup_{t \in S} |\xi(t)\|_\Psi$, note that the condition of Lemma 3.4 of \citet{pollard.mono} is trivially satisfied with $d(s, t) = \|\xi(s) - \xi(t)\|_\Psi = 1.5 [\var\{\xi(s) - \xi(t)\}]^{1/2}$ due to discreteness of $S$ and $D(\epsilon, T, d) \le |S|$ for all $\epsilon > 0$. Therefore we can apply (\ref{entropy1}) with $S$ instead of $T$ to conclude $\|\sup_{t \in S} |\xi(t)|\|_\Psi \le \Delta \sqrt{2 + \log |S|}$ where $\Delta^2 = \sup_{s,t \in S} d(s, t)^2 \le 9 \kappa_S^2$ as in (\ref{diam}).
\eenum
\end{proof}

\bibliographystyle{chicago}
\bibliography{TokdarReferences}

\end{document}